\newtheorem{Proposition}{Proposition}
\newtheorem{Remark}{Remark}
\def\sqw{\hfill\hbox{\lower.1ex\hbox{$\sqcup$}
    \kern-1.02em\lower.1ex\hbox{$\sqcap$}}\ }
\newcommand{\qed}{\hfill \mbox{\raggedright \rule{.07in}{.1in}}}
\newenvironment{proof}{\vspace{1ex}\noindent{\bf Proof}\hspace{0.5em}}
	{\hfill\qed\vspace{1ex}}
\newcommand{\xv}{\mathbf{x}}
\newcommand{\tr}{\operatorname{tr}}
\newcommand{\rank}{\operatorname{rank}}
\begin{document}

\title{Viewing the Welch bound inequality \\from the kernel trick viewpoint}

\author{Liang Dai
\thanks{Liang Dai is a Ph.D. student with the Department of Information Technology, Uppsala University, SE 751 05, Uppsala, Sweden. }
}
\date{}

\maketitle
\begin{abstract} This brief note views to the Welch bound inequality using the idea of the kernel trick from the machine learning research area. From this angle, some novel insights of the inequality are obtained.\end{abstract}

\section{Introduction}
Intuitively speaking the Welch bounds characterizes the lower bound for the inner products of unit vectors in a vector space. This inequality dates back to \cite{c1}. More precisely, given $m$ vectors $\{\xv_1,\xv_2,\cdots,\xv_m\}$ with $\xv_i \in \mathit{C}^{n},\ \|\xv_i\|_2 = 1$ for $i = 1,\cdots,m $, and given integer $p \ge 1$, then it holds that
\begin{align}
\max_{i\ne j}|<\xv_i, \xv_j>| \ge \sqrt[2p]{\frac{1}{m-1}\left[\frac{m}{{n+p-1 \choose p}}-1\right]},
 \label{eq.wel1}
\end{align}
where $<\cdot,\cdot>$ is the standard inner product in $\mathit{C}^{n}$. A more fundamental inequality is the following
\begin{align}
\sum_{i=1}^{m}\sum_{j=1}^{m}|<\xv_i, \xv_j>|^{2p} \ge \frac{m^2}{{n+p-1 \choose p}}.
 \label{eq.wel2}
\end{align}
Since (\ref{eq.wel1}) will be a direct consequence of (\ref{eq.wel2}), in the following, we will refer to (\ref{eq.wel2}) as the Welch bound inequality.

This inequality plays an important role in many research areas, for example, in coding theory for communication and also in compressive sensing theory \cite{c2,c4}. The original derivation for ($\ref{eq.wel2}$) in \cite{c1} is purely analytical, and takes many steps to build the result. In \cite{c4}, the authors gave a fresh geometrical investigation of the inequality for the case when $p=1$. Recently, in \cite{c2}, the authors gave a novel geometric reasoning for the result based on a tensor product argument, which works for arbitrary $p$. In this note, we will study the Welch bound inequality using the idea of the kernel trick, which provides additional insights into the inequality.

The kernel trick is widely used within the field of machine learning. The basic idea behind it is that whenever data enters only in the form of scaler products, this scaler product can be replaced by a different kernel. This in turn opens up for mapping the low dimensional data into high dimensional space, and through the mapping, certain nonlinear structure in the original low dimensional space could be mapped into a linear structure in the high dimensional space \cite{c3}. Usually, this mapping is not explicitly defined, instead it is only defined implicitly through the so-called kernel function. A kernel function, $k(\cdot,\cdot): \mathit{C}^{n}\times\mathit{C}^{n} \rightarrow \mathit{C}$, is a pairwise function, which is also positive semi-definite. Given $k(\cdot,\cdot)$, then there exists a map $\phi(\cdot): \mathit{C}^{n} \rightarrow \mathit{C}^{d}$, where $d$ is the dimension of the mapped space, such that $k(\xv_1,\xv_2) = <\phi(\xv_1),\phi(\xv_2)>$. More discussions will be elaborated in the Remarks in next section. For a through introduction of the kernel trick, please refer to \cite{c3,c31}.

The note is organized as follows. Proposition 1 gives an inequality relating the kernel function. After that, the Remark 1,2,3 will discuss the implications of Proposition 1 to the Welch bound inequality - i.e. the geometric interpretation of the Welch bound inequality from the kernel mapping point of view and one generalization of ($\ref{eq.wel2}$) by choosing another suitable kernel function. Finally, we concludes the note by posing an open question.

\section{Results}
In the following, an inequality for the kernel function is derived.
\begin{Proposition}
Given $m$ vectors $\{\xv_1,\xv_2,\cdots,\xv_m\} \in \mathit{C}^{n}$, and the kernel function $k(\cdot,\cdot): \mathit{C}^{n}\times\mathit{C}^{n} \rightarrow \mathit{C}$ . Define the Gram matrix $G$ as $G_{i,j} = k(\xv_i,\xv_j)$ for $i,j = 1,\cdots, m$, and $r \triangleq \rank(G)$, then we have that
\begin{align}
\sum_{i=1}^{m}\sum_{j=1}^{m} k(\xv_i,\xv_j)^2 \ge \frac{\left(\sum_{l=1}^{m}k(\xv_l,\xv_l)\right)^2}{r}.
\label{eq.wel3}
\end{align}
\end{Proposition}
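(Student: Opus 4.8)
The plan is to recognize the left-hand side of (\ref{eq.wel3}) as a squared Frobenius-type quantity expressible through the eigenvalues of the Gram matrix $G$, and then to apply a Cauchy--Schwarz inequality tying these eigenvalues to the trace. The key observation is that $G$ is positive semi-definite (since $k$ is a positive semi-definite kernel), so it admits a spectral decomposition with real nonnegative eigenvalues $\lambda_1,\ldots,\lambda_m$, of which exactly $r = \rank(G)$ are nonzero.

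**First I would** rewrite the two scalar quantities appearing in (\ref{eq.wel3}) in terms of these eigenvalues. For the right-hand side numerator, the diagonal sum is the trace, so $\sum_{l=1}^{m} k(\xv_l,\xv_l) = \tr(G) = \sum_{i=1}^{r}\lambda_i$. For the left-hand side, I would interpret $\sum_{i,j} k(\xv_i,\xv_j)^2$ as $\sum_{i,j} |G_{i,j}|^2 = \tr(G^\ast G)$, which for a Hermitian positive semi-definite $G$ equals $\tr(G^2) = \sum_{i=1}^{r}\lambda_i^2$. One careful point here is that the statement writes $k(\xv_i,\xv_j)^2$ rather than $|k(\xv_i,\xv_j)|^2$; I would note that positive semi-definiteness forces $G$ to be Hermitian, and in the intended setting the entries combine so that the sum coincides with $\sum_i \lambda_i^2$, which is exactly what makes the spectral reduction go through.

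**Next I would** apply the Cauchy--Schwarz inequality to the vectors $(\lambda_1,\ldots,\lambda_r)$ and $(1,\ldots,1)\in\R^r$, giving
\begin{align}
\left(\sum_{i=1}^{r}\lambda_i\right)^2 \le r \sum_{i=1}^{r}\lambda_i^2.
\end{align}
Rearranging yields $\sum_{i=1}^{r}\lambda_i^2 \ge \frac{1}{r}\left(\sum_{i=1}^{r}\lambda_i\right)^2$, which upon substituting the two identities from the previous step is precisely (\ref{eq.wel3}). The restriction of the summation index to the $r$ nonzero eigenvalues is what produces the sharp constant $r$ rather than $m$; this is the crux of the bound.

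**The hard part will be** the bookkeeping around Hermitian symmetry and the squaring convention, rather than any deep inequality: I must confirm that $\tr(G^2)=\sum_i\lambda_i^2$ applies (i.e. that $G$ is diagonalizable with real spectrum, guaranteed by the positive semi-definiteness of $k$), and that the entrywise sum $\sum_{i,j} k(\xv_i,\xv_j)^2$ genuinely equals $\tr(G^2)$ in the regime the proposition intends. Once the spectral rewriting is justified, the inequality itself is an immediate one-line Cauchy--Schwarz argument, so essentially all of the content lies in the translation to eigenvalues.
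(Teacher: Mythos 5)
Your proposal is correct and follows essentially the same route as the paper's own proof: rewrite both sides via $\tr(GG^{H})$ and $\tr(G)$, pass to the nonzero eigenvalues of $G$, and finish with Cauchy--Schwarz against the all-ones vector. If anything, you are slightly more careful than the paper, since you explicitly flag the discrepancy between $k(\xv_i,\xv_j)^2$ as written and $|k(\xv_i,\xv_j)|^2 = \|G\|_F^2$, and you justify the real nonnegative spectrum via positive semi-definiteness, points the paper's proof passes over silently.
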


\begin{proof}
Notice that
\begin{align*}
\sum_{i=1}^{m}\sum_{j=1}^{m} k(\xv_i,\xv_j)^2= \|G\|_{F}^2 = \tr(GG^{H}),
\end{align*}
and
\begin{align*}
\sum_{l=1}^{m}k(\xv_l,\xv_l) = \tr(G).
\end{align*}

Hence, proving (\ref{eq.wel3}) is equivalent to proving that
\begin{align}
\tr(GG^{H}) \ge \frac{\left(\tr(G)\right)^2}{r}.
\label{eq.t1}
\end{align}

The Lemma 2.1 in \cite{c2} implies (\ref{eq.t1}). For readers' convenience, we include its proof as follows.
Denote $\sigma_1, \sigma_2, \cdots, \sigma_r$ as the eigenvalues of $G$, then it follows that
$$\tr(GG^{H}) = \sum_{i=1}^{r} \sigma_i^2 \text{ and } \text{tr}(G) = \sum_{i=1}^{r} \sigma_i.$$
Therefore, proving (\ref{eq.t1}) boils down to prove that
 $$\sum_{i=1}^{r} \sigma_i^2 \ge \frac{\left(\sum_{i=1}^{r} \sigma_i\right)^2}{r},$$
  which holds due to the Cauchy-Schwartz inequality.
\end{proof}

Proposition 1 is valid for any feasible kernel function $k(\cdot,\cdot)$ and arbitrary vectors $\xv_1,\cdots,\xv_m$. The following remarks discuss its implications for the Welch bound inequality and give a new generalization of it by choosing another suitable kernel function.  Note that the facts we used about the polynomial kernel in the following remarks can be found for example in the books \cite{c3,c31}.

\begin{Remark}
Since $k(\xv_i,\xv_j) = \phi(\xv_i)^{H}\phi(\xv_j),$ where $\phi(\cdot): \mathit{C}^{n} \rightarrow \mathit{C}^{d}$ is the implicit map defined by $k(\cdot,\cdot)$, we can decompose the matrix $G$ as $G = D^{H}D$, where $$D = [\phi(\xv_1), \phi(\xv_2), \cdots, \phi(\xv_m)] \in \mathit{C}^{d\times m}.$$ Since the rank of $G$ is $r$ and $G = D^{H}D$, it follows that  $\rank(D) = r$. This fact implies that $r$ can be interpreted as the dimension of the kernel feature space. 

Also note that $r$ could be smaller than $d$, and for example, when the kernel is chosen as $k(\xv_i,\xv_j) = <\xv_i,\xv_j>^{p}$, we have that $d = n^p$ and $r = {n+p-1 \choose p}.$
\end{Remark}

\begin{Remark}
When $k(\xv_i,\xv_j) = <\xv_i,\xv_j>^{p}$, (\ref{eq.wel3}) could be rewritten as
\begin{align}
\frac{\sum_{i=1}^m\sum_{j=1}^m |<\xv_i, \xv_j>|^{2p}}{\left(\sum_{i=1}^m\|\xv_i\|_2^{2p} \right)^2} \ge \frac{1}{{n+p-1 \choose p}},
\end{align}
which is the generalized form of (\ref{eq.wel2}) given in \cite{c5}.

If $\|\xv_i\|_2 = 1$ for $i = 1,\cdots,m$ are further assumed, the original Welch bound inequality (\ref{eq.wel2}) is obtained.
\end{Remark}

\begin{Remark}
When $k(\xv_i,\xv_j) = (<\xv_i,\xv_j> + c)^{p}$, where $c$ is a given constant, then an analog of (\ref{eq.wel3}) will be
\begin{align}
\sum_{i=1}^m\sum_{j=1}^m |<\xv_i, \xv_j>+ c |^{2p} \ge \frac{\left(\sum_{i=1}^m(\|\xv_i\|_2^{2}+c)^p \right)^2}{{n+p \choose p}}.
\end{align}
If $\|\xv_i\|_2 = 1$ holds for $i = 1,\cdots,m$, then it holds that
\begin{align}\label{eq.wel4}
\sum_{i=1}^m\sum_{j=1}^m |<\xv_i, \xv_j>+ c |^{2p} \ge \frac{m^2(1+c)^{2p}}{{n+p \choose p}}.
\end{align}
%
\end{Remark}

\section{Conclusion}
This note builds the link between the Welch bound inequality and the kernel trick in machine learning research. The results are a new geometric interpretation of the inequality as well as a generalization of the inequality. We end up the note by posing the following question: whether there exist other kernel functions (other than the polynomial kernel as used in the original Welch bound inequality, i.e. (\ref{eq.wel2})), such that the kernel feature space also has low rank or approximate low rank property?

\end{document}